\newtheorem{theorem}{Theorem}[section]
\newtheorem{lemma}[theorem]{Lemma}
\newtheorem{corollary}[theorem]{Corollary}
\newenvironment{proof}[1][Proof]{\begin{trivlist}
\item[\hskip \labelsep {\bfseries #1}]}{\end{trivlist}}
\newenvironment{example}[1][Example]{\begin{trivlist}
\item[\hskip \labelsep {\bfseries #1}]}{\end{trivlist}}
\newcommand{\qed}{\nobreak \ifvmode \relax \else
      \ifdim\lastskip<1.5em \hskip-\lastskip
      \hskip1.5em plus0em minus0.5em \fi \nobreak
      \vrule height0.75em width0.5em depth0.25em\fi}
\title{Constraints for  Evolution Equations
with  Some Special Forms of Lax Pairs and
Distinguishing Lax Pairs by  Available Constraints
}
\author{YuQi Li $^{a}$ \footnote{E-mail: liyuqi@nbu.edu.cn},
Biao Li $^{a}$, SenYue Lou $^{a}$\\
\small $a$.\quad  Center for Nonlinear Science,  Ningbo University,
Ningbo 315211, China}
\begin{document}
\maketitle
\noindent {\large \bf Abstract:}
The constraints for evolution equations with some special forms of Lax pairs
are first investigated.
We show by examples how the method is rooted in the classical literatures
and how the ignored constraints provide nontrivial solutions.
Then we show, by the example of the KdV equation,
how this special form  of Lax pair may be found by the  method of Wahlquist-Estabrook.
At last we propose  how to impose constraints for  general Lax pairs
including nonlinear ones. With the proposition
the true Lax pairs and the fake ones can be distinguished easily.
The linearity nature in integrable partial differential equations seems to have been revealed.

\section{Introduction}
\hspace*{0.6cm}In general, the last step of the inverse scattering method
on the real line is to solve an integral equation such as
Gelfand-Levitan-Marchenko equation.
But only the reflection-less potentials, which are solitons, make up of
a closed system and therefore can be solved completely.
It had been observed that
the reflection-less potential is some function of the eigenfunctions.
With this observation the method of
nonlinearization \cite{CAOCW} of Lax pair was suggested.
Then it was observed \cite{LiYSh2}
that the constraint between the  potential
and the  eigenfunctions  may also be regarded as  symmetry constraint.
Today the method of symmetry constraint has
become a powerful tool for analyzing the solutions of  integrable systems.
By the method of nonlinearization of Lax pair or symmetry constraint,
we will obtain a wider class of solutions much more than the solitons
though the constraint maybe just be obtained by an observation
on the soliton and its corresponding  eigenfunctions.
In fact we will get algebraic-geometric solutions in most cases.
The algebraic-geometric solutions are  too complicated in practice,
especially when we are only interested in the numerical integration of
some initial-boundary problem for
an integrable partial differential equations (PDE).
Now it becomes more and more clear that
a high-precision numerical integration of an integrable PDE
should only cope with the  constraint
between the potential and the eigenfunctions.
Yet we may expect  those kinds of constraint will play a further role
in both analytical and numerical applications.

But the constraints are not so easy to find.
The reason is that the form of the constraints
may vary from one equation or hierarchy to another,
and moreover for the same equation
maybe there are several types of constraints,
which may lead to completely different types of solutions.
Now the method of symmetry constraint is still popular to get  constraints,though a lot of useful constraints may be lost by it.
In the first part of this paper we will analyze a wide class of
Lax pairs, from which the constraints will arise naturally.
The constraints provided here may be or not be the symmetry constraint.
Therefore, sometimes the class of solutions may be expanded.
It must be point out that the special form of Lax pairs
can also be gotten by the Wahlquist-Estabrook (WE) \cite{EW} method.

  Sakovich \cite{Sokovich} had pointed out
by example that Lax pairs with nontrivial
spectral parameter may be fake ones.
In fact the fake  Lax pairs have long been puzzling.
Kaup\cite{KAUP} had advised  a postulate to distinguish the fake Lax pairs
from the true ones. But his postulate is not algorithmic.
And  it also seems to be a little narrow   in practice.
Furthermore there are nonlinear Lax pairs
though they may be linearized in most known examples.
An algorithm for distinguishing the fake Lax pairs
from the true ones should apply in nonlinear Lax pairs
as well as the  usual ones.
In this paper we propose  a  functional equation,
a solution to which will enable us to provide
a nontrivial constraint for the integrable PDE.
The constraint is simply a superposition
of the  solutions to the functional equation.
Probably we may guess
this superposition reflects more or less the linear nature of integrable PDEs.

The paper is organized as follow.
Section 2  propose a kind of natural constraints for
a special form of Lax pairs.
Several  theorems there  will be proved
to guarantee the justice of the  proposal.
In addition several examples are provide in the section,
where some interesting results may be found.
Section 3  first shows  by the example of the KdV equation
how the WE method is capable  of getting  the special form of
Lax pairs described in Section 2.
Then a general requirement for constraining
Lax pairs including nonlinear Lax pairs
is proposed. At last the proposal is illustrated by
two nonlinear Lax pairs of the KdV equation.
Section 4 summarizes the main results.

\section{Constraints for evolution equations with a special form of Lax pair}
In the following paper, in order to avoid confusion by symbols, we define various symbols as follows: $f_t=\dot f=\frac{\partial f}{\partial t}$;
$f_x=f'=\frac{\partial f}{\partial x}$;
$f^{(n)}=\frac{\partial^n f}{\partial x^n} $;
$\hat P 1$ is the function
obtained by applying operator $\hat P$ to function $f=1$;
Pseudo-operator $\partial^{-1}$ is
the inverse of operator $\partial$ \cite{DICKY};
$g[u]$ denotes a differential polynomial of $u$.

\subsection{Observations about constraints for PDEs}
\hspace*{0.6cm}Numerical experiments have shown that most of
the numerical integrations of initial boundary value problems (IBVPs)
for PDEs are much more difficult  than the IBVPs for ordinary differential
equations(ODE).
But for some PDEs the integrations may be done
by  only solving a series of ODEs.
This celebrated  property should be considered as
a kind of integrability for the PDE.
As an example let us explain how the periodic KdV equation $u_t=6 u u_x +u_{xxx}$
is integrable in this sense. The stationary KdV equation is $f_n=0$, where
$f_n=\hat L^n \, u_x$ and $\hat L=\partial^2 +4 u+2 u_x \partial^{-1}$.
It is well-known that $f_n$ is a (2n+1)-order
differential polynomial of $u$.
So $f_n=0$ may be written into a system of
first-order ODEs for $2n+1$ variables $u,u^{(1)},\cdots,u^{(2n)}$.
Also we must know the time evolution for $u,u^{(1)},\cdots,u^{(2n)}$.
In fact
\begin{eqnarray}
\frac{d}{dt} u^{(i)}= \frac{d^i}{dx^i} (6 u u^{(1)}+u^{(3)}),
\quad \mbox{Mod} \quad f_n=0, \quad i=0,1,\cdots, 2n. \label{TevolutionKdV}
\end{eqnarray}
So we have 2 sets of ODEs for the KdV equation. One set governs the
space evolution for $u,u^{(1)},\cdots,u^{(2n)}$. And the other
governs the time evolution for $u,u^{(1)},\cdots,u^{(2n)}$.
So for any $(x,t)$ the values of  $u,u^{(1)},\cdots,u^{(2n)}$ may be
gotten from the initial values
$u_0,u_0^{(1)},\cdots,u_0^{(2n)}$ at $(x_0, t_0)$
by a space evolution and a time evolution.
In other words for some special kind of initial value problems the
periodic KdV equation is solvable by only integrating some ODEs.
It should be noted that the variables $u,u^{(1)},\cdots,u^{(2n)}$
are only 'natural' variables, not  'canonical' variables.
The ODEs written for the 'canonical' variables are much simpler, see the
following sections.

Clearly from the point of view of numerical integration,
the only crucial fact for the integration of the periodic KdV equation
is that $f_n=0$ is invariant under the KdV  flow
or in other words $f_n=0$ is an invariant manifold (IM) of the KdV equation.
But for a given PDE its IMs are not so easy to give, especially
when the required IMs must have some completeness.
Luckily nonlinearization of Lax pair or symmetry constraint just provide a way
to provide IMs for PDEs. Most popularly the symmetry constraint
makes use of  Lax pair in the inverse scattering transformation
\begin{eqnarray}
&&\hat L \psi_i=\lambda_i \psi_i , \label{LIST}\\
&&\frac{\partial}{\partial t} \psi_i=\hat P \psi_i . \label{PIST}
\end{eqnarray}
Still take the KdV equation as an example.
The well-known nonlinearization of Lax pair
\begin{eqnarray}
&&\hat L \psi_i=(\partial^2+u) \psi_i=\lambda_i \psi_i , \label{lIST}\\
&&\frac{\partial}{\partial t} \psi_i=\hat P \psi_i=(4 \partial^3+
3 (u \partial + \partial u) ) \psi_i, \label{pIST}
\end{eqnarray}
of the KdV equation is \begin{eqnarray}
u=c_0+\sum_{i=1}^n c_i \psi_i^2, \label{KdVCTClassic}
\end{eqnarray}
where $\psi_i=\psi_i(x,t)$. With the constraint (\ref{KdVCTClassic}),
Equation (\ref{lIST}) and Equation (\ref{pIST}) become
two sets of ODEs.

A less popular representation of Lax equation is
\begin{eqnarray}
&&f_{i+1}=\hat L f_i , \label{fndef}\\
&&\frac{\partial}{\partial t} f_i=\hat P f_i , \label{MovEq}
\end{eqnarray}
for any $i \in N$.
It is easy to verify that the 	compatibility condition for
(\ref{fndef}) and (\ref{MovEq}) is also the Lax equation in its operator form
\begin{eqnarray}
\frac{d }{d t} \hat L= [\hat P,\hat L] .\label{LaxEq}
\end{eqnarray}
So Equations (\ref{fndef}) and  (\ref{MovEq}) is equivalent to
Equations (\ref{lIST}) and (\ref{pIST}).
Sometimes a constraint for (\ref{fndef}) and (\ref{MovEq}) is more convenient.

There are several noticeable aspects for constraining the Lax equations.

Our first observation is that if $\hat P$ is a differential operator then
$f_n(x,0)=0$ is equivalent to $f_n(x,t)=0$.
Let us take the KdV equation as an example to explain why this hold.
The well-known $\hat P$ of the KdV equation is
$\hat P=4 \partial^3+3 (u \partial + \partial u)$.
So $f_n$ satisfies $f_t=4 f_{xxx}+6 u f_x+3 u_x f$.
By analyzing the Taylor series of $f_n$
we know that if $f_n$ is analytic and $f_n(x,0)=0$, then $f_n(x,t)=0$.
Therefore, if $\hat P$ is a differential operator
then (\ref{fndef}) and (\ref{MovEq}) can be truncated by $f_n=0$.
For the same reason the truncation may also be made by setting
$f_{n}=\sum_{i=1}^{n-1} a_i f_i$.

Our second observation is that the number of equations in
the truncated (\ref{fndef}) and (\ref{MovEq}) is very close to
but not enough
for (\ref{fndef}) and (\ref{MovEq}) to be  2 sets of ODEs.
Also take the  KdV equation as an example.
To make it easy we take its natural Lax pair
$\hat L=\partial^2+4 u+2 u_x \partial^{-1}$,
$\hat P=\partial(\partial^2+6u)$.
The  number of equations in
the truncated (\ref{fndef}) and (\ref{MovEq}) including the KdV equation and
the truncation condition is
$2(n-1)-1+1+1=2n -1$. But we need $2(n-1)+2=2 n$
for $f_1,f_2 ,\cdots,f_{n-1}$ and $u$
in (\ref{fndef}) and (\ref{MovEq}) to be ODE systems.
So we need an additional relation between $u$ and $f_i$.
This is just the nonlinearization of Lax pairs.
For the example of the KdV equation
the additional relation may be $f_1=u_x$,
which is just the usual way to get the stationary solutions.

Our third observation is that for the periodic KdV equation
the IMs mentioned above  is dense.
But for some other systems or another Lax pair,
the relevant IMs may be not dense.
For a nontrivial constraint of Lax equations
it should be the minimum requirement
that the number of the IMs   increases
as the truncation number $n$ increases.

The above three  requirements rule out quite a lot of fake Lax pairs.
But some true Lax pairs may also be ruled out.
For example the famous short-pulse (SP) equation \cite{SchaferWayne}
$u_{xt}=u+u u_x^2+1/2 u^2 u_{xx}$
has a true Lax pair \cite{Brunelli}
\begin{eqnarray}
\hat L&=&(\partial^{-1}+u_x \partial^{-1} u_x) \partial^{-1}, \label{SPl} \\
\hat P&=&\partial^{-1}+\frac{1}{2} \partial u^2 .\label{SPp}
\end{eqnarray}
Here $\hat P$ is not a differential operator. And furthermore SP  equation
is  not an evolution equation.
So the Lax pair (\ref{SPl}) and (\ref{SPp}) does not fulfill  our  first requirement.
Thereafter we will constrain ourselves in evolution equations.

\subsection{A special form of Lax pair and the relevant natural constraint}

\hspace*{0.6cm}We have mentioned that
with the usual Lax pair (\ref{lIST}), (\ref{pIST})
the KdV equation has  constraint (\ref{KdVCTClassic}).
Except this famous constraint, we find that the KdV equation
with the less popular Lax pair (\ref{fndef}) , (\ref{MovEq})
also has another interesting constraint
\begin{eqnarray}
u=c_0+ \sum_{i,j=1}^n c_{i+j} f_i f_j, \label{kdvCons2}
\end{eqnarray}
where $f_k=0$ for $k>n$ and $c_k=0$ for $k<n+1$.
So for any $n \in Z$, $f_n=0$ provides an IM for the KdV equation.

At the first sight
constraint (\ref{KdVCTClassic})  or  (\ref{kdvCons2})
is  strange. In this section
we will investigate a special form of
Lax pairs and give a natural constraint for them.
Then constraints (\ref{KdVCTClassic}) and  (\ref{kdvCons2})
can be understood directly and easily.

We start by the following simple lemma.
\begin{lemma}
If $\hat P $ is  differential operator and
 $g[u]$ satisfies  $\frac{d}{dt} g[u]=\hat P g[u] $,
then  constraint  $g[u]=\sum_{i=1}^n a_i \psi_i$ or  constraint
$g[u]=\sum_{i=1}^n a_i f_i$ ,
 $\sum_{i=1}^n b_i f_i=0$ exist,
where $\psi_i$, $f_i$ are defined in (\ref{LIST}),(\ref{PIST}),
(\ref{fndef}), (\ref{MovEq}).
\end{lemma}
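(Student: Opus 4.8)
The plan is to exploit the one structural feature shared by $g[u]$, the eigenfunctions $\psi_i$ and the chain functions $f_i$: each of them satisfies the \emph{same} linear evolution equation $\frac{d}{dt}\phi=\hat P\phi$. For $g[u]$ this is the hypothesis of the lemma; for the $\psi_i$ it is (\ref{PIST}); for the $f_i$ it is (\ref{MovEq}). Because $\hat P$ acts linearly, any linear combination of solutions of $\frac{d}{dt}\phi=\hat P\phi$ is again a solution, and so is the difference of two solutions. First I would introduce $h:=g[u]-\sum_{i=1}^{n}a_i\psi_i$ and compute $\frac{d}{dt}h=\hat P g[u]-\sum_{i=1}^{n}a_i\hat P\psi_i=\hat P h$, so that the mismatch $h$ obeys the linear equation $\frac{d}{dt}h=\hat P h$.

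Next I would invoke the first observation above. Since $\hat P$ is a differential operator, $\frac{d}{dt}h=\hat P h$ is a genuine linear PDE in which $\frac{d}{dt}h$ is expressed through the $x$-derivatives of $h$ (with coefficients that are differential polynomials in $u$). Expanding $h$ as a Taylor series in $t$ then shows that $h(x,0)=0$ for all $x$ forces every time derivative of $h$ to vanish at $t=0$, whence $h(x,t)\equiv0$ under the standard analyticity assumption. Consequently the relation $g[u]=\sum_{i=1}^{n}a_i\psi_i$, once imposed at $t=0$, is preserved by the flow, and this invariance is exactly the sense in which the constraint \emph{exists}; the constants $a_i$ and the initial eigenfunction data remain free. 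Repeating the computation with $f_i$ in place of $\psi_i$, using (\ref{MovEq}) rather than (\ref{PIST}), delivers the constraint $g[u]=\sum_{i=1}^{n}a_i f_i$.

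The $f_i$ version needs the auxiliary relation $\sum_{i=1}^{n}b_i f_i=0$ because, unlike the independent eigenfunctions $\psi_i$, the $f_i$ are linked by the recursion $f_{i+1}=\hat L f_i$ in (\ref{fndef}); to turn this infinite chain into a \emph{finite} system in $f_1,\dots,f_n$ one must truncate it, and $\sum_{i=1}^{n}b_i f_i=0$ with $b_n\neq0$ is precisely the truncation $f_n=\sum_{i=1}^{n-1}a_i f_i$ mentioned in the observations. Its compatibility with the evolution follows by the identical argument applied to $\tilde h:=\sum_{i=1}^{n}b_i f_i$, which satisfies $\frac{d}{dt}\tilde h=\hat P\tilde h$, so that $\tilde h(x,0)=0$ implies $\tilde h\equiv0$.

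I expect the main obstacle to be the rigorous justification of the implication $h(x,0)=0\Rightarrow h\equiv0$: one has to verify that for the given differential operator $\hat P$ the Taylor / Cauchy--Kovalevskaya reasoning genuinely applies, i.e.\ that differentiating $\frac{d}{dt}h=\hat P h$ repeatedly in $t$ keeps producing only spatial derivatives of $h$ (so that they all vanish when $h$ and its $x$-derivatives vanish at $t=0$) and that the resulting series converges. A secondary point, specific to the chained $f_i$, is to check that imposing both $g[u]=\sum_{i}a_i f_i$ and $\sum_{i}b_i f_i=0$ is consistent with the spatial structure $f_{i+1}=\hat L f_i$ and does not over-determine $u$.
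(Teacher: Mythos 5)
Your proof is correct and follows exactly the route the paper intends: the paper dismisses this lemma with ``The proof is obvious,'' and what it means by that is precisely your argument --- linearity of the differential operator $\hat P$ makes the mismatch $h=g[u]-\sum_{i=1}^n a_i\psi_i$ (or $\tilde h=\sum_{i=1}^n b_i f_i$) itself satisfy $\frac{d}{dt}h=\hat P h$, and the paper's ``first observation'' (the Taylor-series/analyticity argument showing $h(x,0)=0$ implies $h(x,t)\equiv 0$ when $\hat P$ is a differential operator) then guarantees the constraint, once imposed at $t=0$, is preserved by the flow. Your elaboration, including the role of the truncation $\sum_{i=1}^n b_i f_i=0$ as the device that closes the recursion $f_{i+1}=\hat L f_i$ into a finite system, is a faithful and correctly detailed version of the argument the paper leaves implicit.
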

The proof is obvious.

For a given  Lax pair, it is not a easy task to find out
a nontrivial constraint.
However, there exist a natural constraint
for Lax pairs  in the following form (Form I):
\begin{itemize}
\item $\hat L=\hat L^+ + L^F \partial^{-1}$,
where $L^+$ is a differential operator and $L^F$ is a function;
\item $\hat P$ is a differential operator.
\end{itemize}
\begin{theorem} \label{theo01}
For a Lax pair in Form I,
one possible constraint is $L^F=\sum_{i=1}^n a_i \psi_i$,
where $\hat L \psi_i=\lambda_i \psi_i$.
And the other constraint   $\sum_{i=0}^n a_i f_i=0$,
where $f_i=\hat L^i \, L^F$ is also possible.
\end{theorem}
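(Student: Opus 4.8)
The plan is to reduce the entire statement to the single evolution identity $\frac{d}{dt}L^F=\hat P L^F$ and then quote the preceding Lemma. Indeed, both asserted constraints are precisely the two conclusions of that Lemma applied to $g[u]=L^F$: the relation $L^F=\sum_{i=1}^n a_i\psi_i$ is its $\psi$-form, and $\sum_{i=0}^n a_i f_i=0$ with $f_i=\hat L^i L^F$ is its $f$-form once the term $f_0=L^F$ is absorbed into the sum. I would first record that the sequence $f_i=\hat L^i L^F$ automatically obeys $f_{i+1}=\hat L f_i$ as in (\ref{fndef}); and that, granting the base identity $\dot f_0=\dot L^F=\hat P L^F$, it obeys $\dot f_i=\hat P f_i$ as in (\ref{MovEq}) by an induction that uses the operator Lax equation (\ref{LaxEq}): if $\dot f_i=\hat P f_i$ then $\dot f_{i+1}=\dot{\hat L}f_i+\hat L\dot f_i=[\hat P,\hat L]f_i+\hat L\hat P f_i=\hat P\hat L f_i=\hat P f_{i+1}$. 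Thus the whole theorem rests on the base case $\dot L^F=\hat P L^F$.

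To prove that base case, I would read the operator Lax equation $\frac{d}{dt}\hat L=[\hat P,\hat L]$ off at the level of the coefficient of $\partial^{-1}$. Substituting $\hat L=\hat L^{+}+L^F\partial^{-1}$, the left-hand side is $\frac{d}{dt}\hat L^{+}+\dot L^F\partial^{-1}$, whose only non-differential part is $\dot L^F\partial^{-1}$. On the right, $[\hat P,\hat L^{+}]$ is a commutator of two differential operators, hence differential and irrelevant to $\partial^{-1}$, so the whole contribution comes from $[\hat P,L^F\partial^{-1}]=\hat P L^F\partial^{-1}-L^F\partial^{-1}\hat P$. Writing $\hat P=\sum_j p_j\partial^j$ and using the symbolic rule $\partial^{-1}\circ a=\sum_{k\ge0}(-1)^k a^{(k)}\partial^{-1-k}$, the coefficient of $\partial^{-1}$ in $\hat P L^F\partial^{-1}$ is the function $\hat P L^F$ (the differential operator $\hat P$ applied to $L^F$), while that in $L^F\partial^{-1}\hat P$ is $L^F(\hat P^{*}1)$, where $\hat P^{*}=\sum_j(-1)^j\partial^j p_j$ is the formal adjoint. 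Matching the two sides therefore gives $\dot L^F=\hat P L^F-L^F(\hat P^{*}1)$.

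The hard part is the spurious term $L^F(\hat P^{*}1)$. Here I would first note that preservation of Form I is already built into the hypothesis: since $\frac{d}{dt}\hat L$ has no components below $\partial^{-1}$, neither may $[\hat P,\hat L]$, and matching the coefficients of $\partial^{-2},\partial^{-3},\dots$ --- which by the same symbolic rule are $L^F$ times the successive $x$-derivatives of $\hat P^{*}1$ --- forces $\hat P^{*}1$ to be a constant. Finally, since adding a constant to $\hat P$ changes neither $[\hat P,\hat L]$ nor the resulting evolution equation, I may normalize $\hat P$ so that $\hat P^{*}1=0$; with this normalization (the one making $\dot\psi_i=\hat P\psi_i$ in (\ref{PIST}) the evolution compatible with the constraint) the spurious term drops and $\dot L^F=\hat P L^F$. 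This holds transparently in the examples --- for the KdV pair $\hat P=\partial(\partial^2+6u)$ one has $\hat P^{*}1=6u_x-(6u)_x=0$ directly. Applying the Lemma to $g[u]=L^F$ then yields both stated constraints, completing the proof.
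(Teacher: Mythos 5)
Your proposal is correct and follows essentially the same route as the paper's proof: an induction on $i$ using the operator Lax equation (\ref{LaxEq}) for the step $\dot f_{i+1}=\hat P f_{i+1}$, and a base case obtained by matching the negative-order part of $\dot{\hat L}=[\hat P,\hat L]$, which produces the spurious term $L^F(\hat P^{*}1)$ that is shown to be constant and then removed by shifting $\hat P$ by a constant. The only (immaterial) difference is how constancy of $\hat P^{*}1$ is extracted --- you read it off the coefficients of $\partial^{-2},\partial^{-3},\dots$, while the paper rearranges the $\partial^{-1}$-level identity and compares the $\partial^{0}$ and $\partial^{1}$ coefficients.
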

\begin{proof}
The first part.
We should only prove $\frac{\partial}{\partial t} L^F=\hat P  L^F$,
because we already have
$\frac{\partial}{\partial t} \psi_i=\hat P  \psi_i$.
The Lax equation is $\dot {\hat L}=[\hat P,\hat L]$.
The negative part of the Lax equation is
$\dot L^F \partial^{-1}=(\hat P L^F) \partial^{-1}-
L^F \partial^{-1} (\bar P \, 1)$,
where $\bar P$ is the operator conjugate of $\hat P$.
Then we get $\partial \frac{\dot L^F-\hat P L^F}{L^F}=-(\bar P \, 1) \partial$.
So $\frac{\dot L^F-\hat P L^F}{L^F}=-(\bar P \, 1)$ and
$(\frac{\dot L^F-\hat P L^F}{L^F})_x=0$. So $\bar P \, 1$ must be a constant.
But we can redefine $\hat P$ by adding some constant to it
such that  $\bar P \, 1=0$. Then we have $\dot L^F-\hat P L^F=0$.

The second part. We will verify $f_i$ satisfies (\ref{MovEq}).
$f_0$ satisfies (\ref{MovEq}), which has been proven in the first part.
Suppose $f_i$  satisfies (\ref{MovEq}).
Then $\frac{d}{dt} f_{i+1}=\frac{d}{dt}(\hat L f_i)
=(\frac{d}{dt} \hat L) f_i+  \hat L (\frac{d}{dt} f_i )=
(\hat P \hat L-\hat L \hat P) f_i+ \hat L \hat P f_i=
\hat P \hat L f_i=\hat P f_{i+1}$.
\end{proof}

From the second part of the proof we  know
it is a useful method to nonlinearize the Lax system
$\{ (\ref{fndef}) , (\ref{MovEq}) \}$ by finding out  a differential
polynomial $f_0$ such that $\frac{d}{dt} f_0=\hat P f_0$.
But this is not the only way, for example, see constraint (\ref{kdvCons2}).

\subsection{Some  examples}

\hspace*{0.6cm}The following three examples are all Lax equations in Form I.
The first two examples, which are both of the KdV equation,
show that different forms of Lax pair for the same equation
may naturally lead to  different kinds of solutions.
The  third example, which is  the Ito equation,
shows how a special form of Lax pair may be used by two ways
to solve the same equation.

\begin{example}
The recursion operator of the KdV equation

It is well-known that the recursion operator and
the linearization operator form a natural Lax pair. Particularly to
the KdV equation we have
\begin{eqnarray}
&&\hat L=\partial^2 +4 u+2 u_x \partial^{-1} , \label{newKdVL}\\
&&\hat P=\partial (\partial^2 +6 u). \label{newKdVP}
\end{eqnarray}
It is also fairly easy to verify that the Lax equation with
the above $\hat L$ (\ref{newKdVL}) and $\hat P$ (\ref{newKdVP})
is the KdV equation $u_t=6 u u_x +u_{xxx}$.
Obviously $\hat L$ (\ref{newKdVL}) and $\hat P$ (\ref{newKdVP}) is in Form I.
Here $L^F=2 u_x$. So a constraint $u_x=\sum a_i \psi_i$ exists, where
$\psi_i$ satisfies $(\partial^2 +4 u+2 u_x \partial^{-1}) \psi_i=
\lambda_i \psi_i $ and $\frac{\partial}{\partial t}\psi_i=\psi_{ixxx}+
6 (u \psi_i)_x$.
Another constraint is $\sum c_i f_i=0$, where
$f_0=u_x$, the famous finite gap constraint.

This gives a simple explanation for why (\ref{KdVCTClassic}) holds.
Note this classical constraint is the integration of symmetry constraint.
Let $\phi_i=\frac{d}{dx} \psi_i^2$, where $\psi_i$ satisfies
(\ref{lIST}) and (\ref{pIST}). It is easy to verify
\begin{eqnarray}
&&(\partial^2 +4 u+2 u_x \partial^{-1}) \phi_i =4 \lambda_i \phi_i,
\label{kdvphiL}\\
&&\frac{d}{dt} \phi_i=\partial (\partial^2 +6 u) \phi_i \label{kdvphiP}
\end{eqnarray}
So $u_x=\sum_{i=1}^n a_i \phi_i=\sum_{i=1}^n a_i  \frac{d}{dx}\psi_i^2$
i.e., $u= c_0+\sum_{i=1}^n a_i \psi_i^2 $ is a proper
constraint.

How about (\ref{kdvCons2})? Let $f_j=4^j \sum_{i=j}^n \psi_i \psi_{n+j-i}$.
Then we can verify $\hat L f_j=f_{j+1}$ and
$\frac{\partial}{\partial t} f_j =\hat P f_j$,
where
\begin{eqnarray}
&&\hat L=\partial^2+4 u-2 \partial^{-1}u_x, \label{conjL}\\
&&\hat P=\partial^3+ 6 u \partial. \label{conjP}
\end{eqnarray}
It is obvious $\frac{\partial}{\partial t}(u-c_0)=\hat P (u-c_0)$.
So $u=c_0+\sum a_i f_i$ is a proper constraint, which is equivalent
to (\ref{kdvCons2}).

\end{example}

\begin{example} The soliton Lax pair of the KdV equation

It is known that the KdV  equation still has another Lax pair
\begin{eqnarray}
\hat L&=&\partial + u \partial^{-1}, \label{kdvL2}\\
\hat P&=&\partial^3+ 3 \partial u, \label{kdvP2}
\end{eqnarray}
which is also in Form I.
So $u=\sum_{i=1}^n a_i \psi_i$ or $u=\sum_{i=1}^n a_i f_i$
is a proper constraint.

We are extremely interested in the case $\hat L^n 0=0$.
\end{example}
\begin{theorem}
 $(\partial + u \partial^{-1})^{n+1} 0=0$, $n=1,2,\cdots$,
generates all  $n$-soliton solitons of the KdV equation.
\end{theorem}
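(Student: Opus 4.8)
The plan is to proceed in three stages: convert the operator identity into the natural constraint of Theorem~\ref{theo01}, solve that constraint through its eigenfunction expansion, and match the resulting potentials to the reflectionless potentials of inverse scattering, which are exactly the $n$-solitons. First I would make precise what $\hat L^{n+1}0$ means. Since $\hat L=\partial+u\partial^{-1}$ and $\partial^{-1}$ carries an integration constant, each application of $\hat L$ to the chain $g_0=0$, $g_{k+1}=\hat L g_k$ introduces one new constant, and a short induction gives
\begin{eqnarray}
&&\hat L^{n+1}0=\sum_{i=0}^n a_i f_i,\qquad f_i=\hat L^i u,\quad f_0=u, \nonumber
\end{eqnarray}
with $a_0,\dots,a_n$ the arbitrary integration constants. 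Hence $\hat L^{n+1}0=0$ is precisely the second constraint of Theorem~\ref{theo01} for the Lax pair (\ref{kdvL2}), (\ref{kdvP2}) with $L^F=u$. By that theorem the locus $\hat L^{n+1}0=0$ is an invariant manifold of the KdV flow and reduces (\ref{fndef}), (\ref{MovEq}) to ODE systems, so a solution launched on it stays on it and evolves as a finite-dimensional dynamical system.

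Second I would solve the constraint. Writing $p(\lambda)=\sum_{i=0}^n a_i\lambda^i$, it reads $p(\hat L)u=0$, so $u$ lies in the kernel of $\prod_{j=1}^n(\hat L-\lambda_j)$ with $\lambda_j$ the roots of $p$; equivalently, by the first part of Theorem~\ref{theo01}, $u=\sum_{j=1}^n b_j\psi_j$ with $\hat L\psi_j=\lambda_j\psi_j$. The decisive computation is to unravel the eigenvalue problem $\psi_x+u\partial^{-1}\psi=\lambda\psi$: putting $\chi=\partial^{-1}\psi$ and then $\chi=e^{\lambda x/2}\eta$ turns it into the Schr\"odinger equation $(\partial^2+u)\eta=\tfrac{\lambda^2}{4}\eta$. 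Thus each $\hat L$-eigenfunction is built from a Schr\"odinger eigenfunction at energy $\lambda_j^2/4$, and the $n$ roots $\lambda_j$ supply exactly $n$ spectral points.

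Third I would identify the family. Because $u$ is a finite combination of $n$ such eigenfunctions with no continuous-spectrum contribution, the associated Schr\"odinger potential is reflectionless with precisely $n$ discrete eigenvalues; by Gelfand-Levitan-Marchenko theory these potentials are exactly the $n$-soliton ones, and the time law $\partial_t\psi_j=\hat P\psi_j$ reproduces the soliton velocities. Completeness (``all'' $n$-solitons) would be checked by parameter counting: the homogeneous constraint contributes $n$ essential parameters through the eigenvalues $\lambda_j$, while integrating the resulting $n$-th order spatial ODE for $u$ supplies $n$ phase constants, giving the full $2n$-parameter soliton family.

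I expect this last identification, rather than the algebra of the first stage, to be the main obstacle. One must show the constraint forces a genuinely reflectionless potential, excluding the finite-gap or singular/rational solutions that satisfy the same polynomial relation in $\hat L$, and that every admissible choice of $n$ eigenvalues and $n$ phases is actually realized; this amounts to pinning down the admissible ranges of the $\lambda_j$ so that the $\psi_j$ are genuine bound states and the kernel of $p(\hat L)$ consists of decaying functions. The base case $n=1$ is reassuring and would serve both as a check and as a possible seed for an inductive Darboux-type argument: there $\hat L^{2}0=0$ integrates to a Riccati equation for $w=\partial^{-1}u$ whose solution is the familiar $\mathrm{sech}^2$ one-soliton.
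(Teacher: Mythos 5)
Your first two stages are correct, and the Schr\"odinger reduction in stage 2 is precisely the structure the paper's proof uses implicitly: the paper's $N_l$ plays the role of your $\eta$, and its relation $\hat L N_l' = -2\kappa_l N_l' + \gamma_{1l} u$ is your eigenvalue equation with $\lambda = -2\kappa_l$. The genuine gap is where you yourself locate it, stage 3, but it is worse than an unfinished verification: the inference ``$u$ is a finite combination of $\hat L$-eigenfunctions, hence the Schr\"odinger potential is reflectionless, hence Gelfand--Levitan--Marchenko theory gives the $n$-solitons'' fails as stated. Finiteness of the eigenfunction expansion does not control the reflection coefficient --- the classical constraint (\ref{KdVCTClassic}) is equally a finite combination of eigenfunction data and produces finite-gap, not reflectionless, potentials --- and, more fatally, the constraint manifold is strictly larger than the set of genuine (real, decaying, nonsingular) reflectionless potentials. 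Already for $n=1$ your own Riccati equation $w' + \frac{1}{2}w^2 = c$, $w=\partial^{-1}u$, has besides $u = c\,\mathrm{sech}^2$ the singular solutions $u = -c\,\mathrm{csch}^2$ and, at $c=0$, $u = -2/(x-x_0)^2$, which escape the inverse-scattering framework entirely and belong to the soliton family only by analytic continuation of its parameters. So no choice of ``admissible ranges of the $\lambda_j$'' can make your spectral-theoretic identification simultaneously true and exhaustive.

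The paper's proof avoids spectral theory altogether and runs in the opposite direction: it takes the explicit algebraic $n$-soliton formula (parameters unrestricted, so singular members included), verifies by induction that $\hat L^m u = 2\sum_j \left[(-2\kappa_j)^m + \gamma_1(-2\kappa_j)^{m-1}+\cdots+\gamma_m\right] c_j N_j'$, so that the family lies on the constraint with the $-2\kappa_j$ the roots of the resulting polynomial, and then concludes by dimension count --- the family carries $2n$ free parameters while the constraint is a system of $2n$ first-order ODEs, hence the family is the general solution. Your closing parameter count ($n$ eigenvalues plus $n$ phases $=2n$) is the same argument; if you replace the Gelfand--Levitan--Marchenko appeal by a direct substitution of the soliton formula into the constraint, your stages 1--2 plus that count reproduce the paper's proof. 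As written, however, the central identification of your proposal is missing and cannot be completed in the form proposed.
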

\begin{proof}
Recall the soliton solutions for the KdV equation\cite{LiYSh} is
\begin{eqnarray}
&&N_l=\exp (-2 \kappa_l x+ \theta_l)
  \left(1+\sum_{j=1}^n \frac{c_j N_j}{\kappa_l+
  \kappa_j} \right) , \quad l=1,2,\cdots, n\nonumber \\
&&u=2 \sum_{j=1}^n c_j N_j' . \nonumber
\end{eqnarray}
First we will prove
\begin{eqnarray}
N_{l}''+u N_l=-2 \kappa_l N_{l}'.\nonumber
\end{eqnarray}
Or
\begin{eqnarray}
\hat L N_{l}'=-2 \kappa_l N_{l}'+\gamma_{1l} u. \nonumber
\end{eqnarray}
Here the appearance of $\gamma_{1l} u$ is due to $\partial^{-1}$, which is
an indefinite integral operator.
Then
\begin{eqnarray}
\hat L^m u=2 \sum_{j=1}^n \left[ (-2 \kappa_j)^m +
    \gamma_{1} (-2 \kappa_j)^{m-1}+\cdots+
    \gamma_{m} ) c_j N_{j}'\right] . \nonumber
\end{eqnarray}
From the expression of $\hat L^n u=0$ we known that $\kappa_j$ is
completely determined by $\gamma_j$:
\begin{eqnarray}
(-2 \kappa_j)^n +
    \gamma_{1} (-2 \kappa_j)^{n-1}+\cdots+
    \gamma_{m} =0.
\end{eqnarray}
Altogether the $n$-solitons satisfy $\hat L^n 0=0$.
But $n$-soliton solutions have $2 n$ free parameters
and $\hat L^n 0=0$ can be written as  $2 n$ first-order ODEs.
So the general solution of $\hat L^n 0=0$ is the $n$-soliton solutions.
\end{proof}

\begin{example} Ito equation (Drinfeld-Sokolov II)

The Ito equation \cite{DS, Ito}
\begin{eqnarray}
&&u_t=3 v_x , \label{NewEqMov1}\\
&&v_t=(u v)_x+v_{xxx} \label{NewEqMov2}
\end{eqnarray}
has a Lax pair \cite{DS}
\begin{eqnarray}
&&\hat L=\partial^3 +u\partial +u_x+v\partial^{-1},
\label{NewEqL} \\
&&\hat P=\partial (\partial^2 +u ), \label{NewEqP}
\end{eqnarray}
which is in Form I.
We will investigate the constraints $\hat L^n 0=0$.
The first  few constraints of
(\ref{NewEqMov1}) and (\ref{NewEqMov2}) can be easily solved.

The first constraint is $\hat L 0=v=0$. By (\ref{NewEqMov1}) $u_t=0$.
So $u=u_0(x)$.

The second constraint is $\hat L^2 0=v_{xxx}+(u v)_x+ v \int v dx=0$.
By (\ref{NewEqMov2}) $v_t=-v \int v dx$.
Let us introduce a new variable $\phi$ such that $\phi_x=v$.
Then $\phi_{xt}=-\phi_x \phi=-\frac{1}{2}(\phi^2)_x$.
So $\phi_t=-\frac{1}{2} \phi^2+g(t)$.
We can prove $g(t)$ is independent on $t$.
So
\begin{eqnarray}
\phi_t =-\frac{1}{2} \phi^2+\frac{c_0}{2}. \label{inva01}
\end{eqnarray}
The solution of (\ref{inva01}) is $\phi=\sqrt{c_0}
 \tanh (\frac{\sqrt{c_0}}{2}t+c_1)$.
$c_0$ is independent on $x$. But $c_1$ may be dependent on $x$.
So the final result of $\psi$ is
\begin{eqnarray}
\phi=\sqrt{c_0} \tanh (\frac{\sqrt{c_0}}{2}t+c_1(x) ). \label{solu01}
\end{eqnarray}
We may prove
\begin{eqnarray}
\phi_{xxx}+u \phi_x +\frac{1}{2} \phi^2=\frac{c_0}{2}. \label{constrpsi01}
\end{eqnarray}
So
\begin{eqnarray}
&&u=\frac{1}{\phi_x}(\frac{c_0}{2}-\frac{1}{2} \phi^2-\phi_{xxx}),
\label{soluu01}\\
&&v=\phi_x \label{solv01}.
\end{eqnarray}
is one solution of (\ref{NewEqMov1}) and (\ref{NewEqMov2}).

The third constraint is $\hat L^3 0=0$.
By a tedious study of this constraint,
we find if  $\phi$ satisfies the following equation
\begin{eqnarray}
\phi_t=-1/2 \phi^2+f_1(t) \phi +f_2 , \nonumber
\end{eqnarray}
where
\begin{eqnarray}
\frac{d f_1}{dt}=\gamma(t) , \quad
\frac{d \gamma}{dt} =f_1(t) \gamma(t) , \nonumber
\end{eqnarray}
and $f_2$ is an arbitrary constant,
then $v$  and $u$ expressed  by
\begin{eqnarray}
v&=&\phi_x , \nonumber\\
u&=&\frac{1}{\phi_x} (\phi_t -\phi_{xxx}-\gamma ). \label{ItoSolution}
\end{eqnarray}
is a solution.

It seems very difficult  to solve $\hat L^n 0=0$, $n \geq 3$.
So how to analyze the high-mode vibrations?

First by Theorem \ref{theo01} we can impose a
constraint $v=\sum_{i=1}^n a_i \psi_i$, where
$\hat L \psi_i=\lambda_i \psi_i$ and $\psi_{it}=\hat P \psi_i$.
After a  little modification of the form of the Lax equation, we get
\begin{eqnarray}
&&
\left(\frac{\lambda_i \psi_{i}-\psi_i'''-(u \psi_i)'}{v}
\right)'=\psi_i , \nonumber\\
&&\dot \psi_i=\psi_i'''+(u \psi_i)' , \nonumber \\
&&\dot u=3 v'. \nonumber
\end{eqnarray}
Let $\psi_i=\varphi_i'$. Then $v=\sum_{i=1}^n a_i \varphi_i'$. So we get
\begin{eqnarray}
&&\lambda_i \varphi_{i}'-\varphi_i''''-(u \varphi_i')'=
 v \varphi_i , \label{intg01} \\
&&\dot \varphi_i=\varphi_i'''+u \varphi_i' , \label{git01}\\
&&\dot u=3 v' . \label{utvx01}
\end{eqnarray}
Multiplying (\ref{intg01}) by $a_i$ and summing over $i$ we get
\begin{eqnarray}
\sum_{i=1}^n a_i (\lambda_i \varphi_{i}'-\varphi_i''''-(u \varphi_i')' )
= v \sum_{i=1}^n a_i \varphi_i
=\frac{1}{2} \left((\sum_{i=1}^n a_i \varphi_i)^2 \right)' . \label{intg02}
\end{eqnarray}
Integrating (\ref{intg02}) we immediately  get
\begin{eqnarray}
\sum_{i=1}^n a_i (\lambda_i \varphi_{i}-\varphi_i'''- u \varphi_i' )=
\frac{1}{2} \left(\sum_{i=1}^n a_i \varphi_i \right)^2+\gamma. \nonumber
\end{eqnarray}
At  first glance $\gamma$ is a function of $t$.
But by (\ref{utvx01}) we known $\dot \gamma(t)=0$. So $\gamma$ is a constant.
Now $u$ is solved as
\begin{eqnarray}
u=-\frac{1}{v}\left[\frac{1}{2} (\sum_{i=1}^n a_i \varphi_i)^2+v''+\gamma
-\sum_{i=1}^n a_i \lambda_i \varphi_{i}\right] . \label{ItoExpressU}
\end{eqnarray}
Totally Equation (\ref{intg01}) and Equation (\ref{git01})
contain $(n-1)+n=2n-1$
differential equations. So it is not enough for them
to be ODE systems.
The usual way to overcome this problem is  to introduce  another constraint.
But for the Ito equation there is no need to do such things.
The reason is as follows.
Obviously we have
\begin{eqnarray}
\dot \varphi'_i=\lambda_i \varphi'-\varphi_i \sum_{j=1}^n a_j \varphi'_j.
\label{phixtIto}
\end{eqnarray}
By differentiate (\ref{phixtIto}) with respect to $x$  once and twice
we get another 2 sets of differential equations.
Together with (\ref{git01}) and (\ref{phixtIto})
we have a closed form for the $t$-evolution of
$\varphi_i$, $\varphi'_i$, $\varphi''_i$  and $\varphi'''_i$.
This is  enough for a numerical computation for the Ito equation.
Note that Equation (\ref{intg01}) may be regarded as  $4 n-1$ first-order
differential constraints in the $x$-direction,
i.e., there is still one arbitrary function $\varphi_{i_0}(x)$
among the $4 n$ initial functions $\varphi_i$.
\end{example}

\subsection{Further generalizations}
\hspace*{0.6cm}Now  we will generalize Form I a little. Anyway we
have the following theorem:
\begin{theorem}  \label{theo02}
If $\hat L=\hat L^+ +\sum_{i=1}^n f_i \partial^{-1} g_i$, where
$\hat L^+$ is a differential operator, and also $\hat P$ is
a differential operator, then there a linear composition
of $f_i$s
$L^F= \sum_{j=1}^n \alpha_i f_i$ such that
$\frac{d}{dt} L^F=\hat P L^F$, i.e.,
a nonlinearization $L^F=\sum_{i=1}^m \beta_i \psi_i$ exists,
where $\psi$ is eigenfunction $\hat L \psi=\lambda \psi$.
\end{theorem}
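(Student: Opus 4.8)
The plan is to follow the strategy of Theorem~\ref{theo01} and extract the desired constraint from the \emph{negative part} (the pseudo-differential tail of order $\le-1$) of the Lax equation $\dot{\hat L}=[\hat P,\hat L]$. As in that proof I would first add a constant to $\hat P$ so that $\bar P\,1=0$, and I would assume the tail is written minimally, with $\{f_i\}$ and $\{g_i\}$ each linearly independent. The computation rests on two elementary facts of pseudo-differential calculus: for a differential operator $\hat P$ and functions $a,b$ the negative part of $\hat P\circ(a\partial^{-1}b)$ is $(\hat P a)\partial^{-1}b$, while the negative part of $(a\partial^{-1}b)\circ\hat P$ is $a\,\partial^{-1}(\bar P b)$. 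The first holds because only the zeroth-order coefficient of the differential operator $\hat P\circ a$ survives after composition with $\partial^{-1}$; the second is the integration-by-parts expansion whose negative part reads $\partial^{-1}\circ Q=\sum_{r\ge0}(-1)^r(\bar Q\,1)^{(r)}\partial^{-1-r}$, applied to $Q=b\hat P$. Setting $g_i=1$ reduces both to the identity already used in Theorem~\ref{theo01}.

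Summing over $i$, the negative part of the Lax equation then collapses to the single operator identity
\[\sum_{i=1}^n(\dot f_i-\hat P f_i)\,\partial^{-1}g_i+\sum_{i=1}^n f_i\,\partial^{-1}(\dot g_i+\bar P g_i)=0.\]
Writing $h_i:=\dot f_i-\hat P f_i$ and $k_i:=\dot g_i+\bar P g_i$, the goal is reduced to producing coefficients $\alpha_i$, not all zero, with $\sum_i\alpha_i h_i=0$: for then $L^F=\sum_i\alpha_i f_i$ satisfies $\dot L^F=\hat P L^F$, and the elementary Lemma stated at the beginning of this subsection immediately expands such an $L^F$ as $\sum_i\beta_i\psi_i$ in eigenfunctions, giving the claimed nonlinearization. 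The guiding principle is that the displayed identity says precisely that $(h_i),(k_i)$ is a \emph{null deformation} of the tail $\sum_i f_i\partial^{-1}g_i$.

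The heart of the argument is to classify these null deformations. Under minimality I would show that $\sum_i h_i\partial^{-1}g_i+\sum_i f_i\partial^{-1}k_i=0$ forces $h_i=\sum_j A_{ij}f_j$ and $k_i=-\sum_j A_{ji}g_j$ for some matrix $A=A(t)$; equivalently $\dot f_i=\hat P f_i+\sum_j A_{ij}f_j$. This $A$ is exactly the gauge freedom of the decomposition: the tail is invariant under $f_i\mapsto\sum_j M_{ij}(t)f_j$, $g_i\mapsto\sum_j(M^{-T})_{ij}(t)g_j$, and under such a change $A$ transforms by $A\mapsto\dot M M^{-1}+MAM^{-1}$. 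Solving the linear matrix ODE $\dot M=-MA$ annihilates the coupling, so in the gauged frame each $\tilde f_i=\sum_j M_{ij}f_j$ obeys $\dot{\tilde f}_i=\hat P\tilde f_i$, and any one of them (e.g.\ $L^F=\tilde f_1=\sum_j M_{1j}(t)f_j$) is the sought linear composition of the $f_i$.

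I expect the main obstacle to be the classification step, i.e.\ pinning down the non-uniqueness of a pseudo-differential tail $\sum_i a_i\partial^{-1}b_i$. Since such an operator vanishes iff $\sum_i a_i b_i^{(s)}=0$ for every $s\ge0$, the implication ``$h_i=\sum_j A_{ij}f_j$'' is a Wronskian-type statement that genuinely requires the linear independence of the $f_i$ and $g_i$; making this precise — and deciding whether the composition $\alpha_i$ can be taken constant or must be allowed to depend on $t$, since the gauge construction produces a $t$-dependent combination in general — is where the real work lies. The cleanest route is probably to isolate a small lemma on the kernel of the map $(a_i)\mapsto\sum_i a_i\partial^{-1}b_i$ for linearly independent $b_i$, and then feed it into the gauge computation above.
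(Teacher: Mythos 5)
Your proposal is correct and follows essentially the approach the paper intends for Theorem~\ref{theo02}: projecting the Lax equation onto its negative (pseudo-differential) part and then classifying vanishing tails $\sum_i a_i\partial^{-1}b_i$, which is precisely the content of Lemma~\ref{lemma01} and Corollary~\ref{cor01}, proved by the same Taylor/Wronskian argument you sketch. If anything, you supply more than the text does, since the paper only states these two auxiliary results and never assembles them into a written proof; your gauge step (solving $\dot M=-MA$ to decouple $\dot f_i=\hat P f_i+\sum_j A_{ij}f_j$) and your observation that the coefficients $\alpha_i$ must in general be allowed to depend on $t$ make that missing assembly explicit.
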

We say  $\hat L$ and $\hat P$ in Theorem \ref{theo02} is of Form II.
To prove Theorem \ref{theo02} we need the following
Lemma \ref{lemma01} and Corollary \ref{cor01}.

\begin{lemma} \label{lemma01}
Iff $\sum_{i=1}^n f_i(x) \partial^{-1} g_i(x)=0$,
then $\sum_{i=1}^n f_i(x) g_i(y)=0$.
\end{lemma}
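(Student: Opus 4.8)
The plan is to translate the operator equation into an equivalent family of scalar (pointwise) equations through the formal symbol calculus of pseudo-differential operators, and then to read those scalar equations as the Taylor data of the bivariate function $\Phi(x,y)=\sum_{i=1}^{n} f_i(x) g_i(y)$ along the diagonal $y=x$.

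First I would push each $\partial^{-1}$ to the far right past the multiplication operators $g_i$ by the standard expansion
\[
\partial^{-1}\circ g_i=\sum_{k=0}^{\infty}(-1)^k g_i^{(k)}\,\partial^{-1-k}.
\]
Substituting and collecting like powers of $\partial$ gives
\[
\sum_{i=1}^{n} f_i\,\partial^{-1} g_i=\sum_{k=0}^{\infty}(-1)^k\Big(\sum_{i=1}^{n} f_i\, g_i^{(k)}\Big)\partial^{-1-k}.
\]
Because the powers $\partial^{-1-k}$ are linearly independent in the operator algebra, the left-hand side is the zero operator if and only if every coefficient vanishes, that is $\sum_{i=1}^{n} f_i(x) g_i^{(k)}(x)=0$ for all $k\geq 0$.

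Next I would introduce $\Phi(x,y)=\sum_{i=1}^{n} f_i(x) g_i(y)$ and observe that $\partial_y^{k}\Phi(x,y)\big|_{y=x}=\sum_{i=1}^{n} f_i(x) g_i^{(k)}(x)$, so the coefficient conditions just obtained say exactly that all $y$-derivatives of $\Phi$ vanish on the diagonal $y=x$. The forward implication then follows from a Taylor expansion of $\Phi(x,\cdot)$ about $y=x$, which forces $\Phi\equiv 0$; the converse is immediate, since $\Phi\equiv 0$ makes all those $y$-derivatives vanish and hence reconstitutes the operator identity from the displayed expansion.

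The hard part is the forward direction: the operator equation only delivers information about $\Phi$ and its $y$-derivatives on the diagonal, and turning that into vanishing everywhere requires a rigidity hypothesis, namely that $\Phi(x,\cdot)$ be real-analytic in $y$ (the setting in which the formal series above is justified anyway). An alternative that sidesteps the analyticity issue is to interpret $\partial^{-1}$ as the integral operator with Heaviside kernel, so that $\sum_i f_i\,\partial^{-1} g_i$ has kernel $\theta(x-s)\sum_i f_i(x) g_i(s)$; vanishing of the operator then forces $\sum_i f_i(x) g_i(s)=0$ for $x>s$, and continuity extends this to all $x,y$. I would present the symbol-calculus argument as the main line, consistent with the rest of the paper, and note the kernel picture as a check.
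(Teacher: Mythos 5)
Your proposal is correct and takes essentially the same route as the paper's own proof: both arguments expand $\sum_{i} f_i\,\partial^{-1} g_i$ in powers of $\partial^{-1-k}$ to conclude $\sum_{i} f_i(x)\,g_i^{(k)}(x)=0$ for all $k\geq 0$, and then recognize $\sum_{i} f_i(x)\,g_i(y)$ as the Taylor series of the $g_i$ about $y=x$, which the paper writes as $g_i(x+(y-x))$. Your explicit flagging of the analyticity hypothesis and the alternative Heaviside-kernel check are reasonable refinements, but the core argument coincides with the paper's.
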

\begin{proof}
$\sum_{i=1}^n f_i(x) \partial^{-1} g_i(x)=0$ is
equivalent to $\sum_{i=1}^n f_i(x) g_i^{(k)}(x)=0$, $k=0,1,2,\cdots$.
So $\sum_{i=1}^n f_i(x) g_i(y)=\sum_{i=1}^n f_i(x) g_i(x+(y-x))=
\sum_{i=1}^n \sum_{k=0}^\infty\frac{1}{k!} f_i(x) g_i^{(k)}(x) (y-x)^k=0$.
\end{proof}
\begin{corollary} \label{cor01}
If $\sum_{i=1}^n A_i(x) \partial^{-1} B_i(x)=
\sum_{i=1}^n F_i(x) \partial^{-1} G_i(x)$  and  both $\{A_i(x)$\}
and \{ $G_i(x)\}$
are linearly independent, then $B_i(x)$ is a linear composition of
$G_j(x)$ and $F_i(x)$ is a linear composition of
$A_j(x)$.
\end{corollary}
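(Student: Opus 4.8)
The plan is to strip away the pseudo-differential operator $\partial^{-1}$ using Lemma~\ref{lemma01}, reducing the identity to a purely pointwise statement in two variables, and then to recover the coefficient relations from the linear-independence hypotheses. First I would move everything to one side and write the hypothesis as a single vanishing sum,
\[
\sum_{i=1}^n A_i(x)\,\partial^{-1} B_i(x) + \sum_{i=1}^n \big(-F_i(x)\big)\,\partial^{-1} G_i(x) = 0 ,
\]
which is exactly of the form treated in Lemma~\ref{lemma01} applied to the combined families $\{A_i,-F_i\}$ and $\{B_i,G_i\}$. The lemma then converts this into the separated-variable identity
\[
\sum_{i=1}^n A_i(x)\,B_i(y) = \sum_{i=1}^n F_i(x)\,G_i(y) \qquad \text{for all } x,y .
\]
This is the crucial reduction: we now face a purely algebraic question about a kernel written in two ways as a finite sum of products, with no operators remaining.

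Next I would exploit the independence of $\{A_i\}$. Since these functions are linearly independent, one can select points $x_1,\dots,x_n$ for which the matrix $M=[A_i(x_j)]$ is invertible (equivalently, there is a dual family of functionals $\ell_j$ acting in $x$ with $\ell_j(A_i)=\delta_{ij}$). Evaluating the separated identity at $x=x_j$ produces the linear system $\sum_i A_i(x_j) B_i(y)=\sum_k F_k(x_j) G_k(y)$ for $j=1,\dots,n$; inverting $M$ then yields $B_i(y)=\sum_k \big(\sum_j (M^{-1})_{ij} F_k(x_j)\big) G_k(y)$, exhibiting each $B_i$ as a linear combination of the $G_k$. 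The statement for the $F_i$ follows by the mirror-image argument, fixing instead points $y_1,\dots,y_n$ where $[G_i(y_j)]$ is invertible---here the independence of $\{G_i\}$ is used---and inverting in the $y$-variable to write each $F_i(x)$ as a linear combination of the $A_j(x)$.

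The only delicate ingredient is the existence of the sample points (or dual functionals) realizing an invertible evaluation matrix; this is precisely where the two linear-independence hypotheses are consumed, and it rests on the standard fact that $n$ linearly independent functions admit $n$ arguments whose evaluation matrix is nonsingular. I expect this to be the main obstacle, and it is a mild one: once Lemma~\ref{lemma01} has removed the $\partial^{-1}$, everything else is linear bookkeeping.
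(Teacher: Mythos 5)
Your proposal is correct and follows exactly the route the paper intends: the paper states Corollary~\ref{cor01} without proof, as an immediate consequence of Lemma~\ref{lemma01} applied to the difference of the two sides, which is precisely your reduction to the separated identity $\sum_i A_i(x)B_i(y)=\sum_i F_i(x)G_i(y)$. Your subsequent step --- choosing evaluation points $x_1,\dots,x_n$ (resp.\ $y_1,\dots,y_n$) with nonsingular matrix $[A_i(x_j)]$ (resp.\ $[G_i(y_j)]$), which exist by linear independence, and inverting --- correctly supplies the linear-algebra bookkeeping the paper leaves implicit.
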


Theorem \ref{theo02} guarantees that there is a natural constraint
for a Lax pair in Form II.
For Lax pairs in  more complicated form, such as $\hat L$ and $\hat P$
are matrix, we have not reach a general result like Theorem \ref{theo01}.
But it seems always a good guess
that a linear composition $v$ of the components
before $\partial^{-1}$ satisfies $\dot v=\hat P v$.

\begin{example}ZS-AKNS

With the help of recursion operator the AKNS hierarchy can
be expressed as a simple expression
\begin{eqnarray}
&&\hat L=\frac{1}{i} \left(
\begin{array}{cc}
-\partial+2 q \partial^{-1} r& 2 q \partial^{-1} q\\
-2 r \partial^{-1}r & \partial-2 r \partial^{-1} q
\end{array}
\right), \label{AKNSL}\\
&& \left(
\begin{array}{c}
q\\r
\end{array}
\right)_t=\hat L^n
\left( \begin{array}{c}
-i q\\i r
\end{array}
\right) ,  \label{AKNSeq}
\end{eqnarray}
where $i$ is the imaginary unit.
Here $\hat L$ is the recursion operator.
So $\hat L$ and the linearization operator $\hat P$ form
a natural Lax equation $\frac{d}{dt}{\hat L}=[\hat P, \hat L]$.
The symmetry $\sigma_i=\left(\begin{array}{c}\zeta_i \\ \xi_i\end{array}
\right)$ satisfies
\begin{eqnarray}
&&\sigma_{i+1}=\hat L \sigma_i, \nonumber\\
&&\frac{d}{dt} \sigma_i=\hat P \sigma_i. \nonumber
\end{eqnarray}
But $\left(\begin{array}{c}q\\ -r \end{array} \right)$ is
also a symmetry because
if $\left(\begin{array}{c}q\\ r \end{array} \right)$ is a solution of
(\ref{AKNSL}) and (\ref{AKNSeq}) then
$\left(\begin{array}{c}\bar q\\\bar r \end{array}\right)=
\left(\begin{array}{c}k q\\\ r/k \end{array}\right)$
 is also a solution, where $k \in \mathbb{R}$.
So we  know
\begin{eqnarray}
\frac{d}{dt} \left( \begin{array}{c} q\\-r \end{array} \right)=
\hat P  \left( \begin{array}{c} q\\-r \end{array} \right) . \nonumber
\end{eqnarray}
Therefore one possible constraint is
$\displaystyle \left( \begin{array}{c}q\\-r
\end{array}\right)=\sum_{i=1}^m a_i \sigma_i$.
Meanwhile we can also write the Lax equation as
$\hat L \Psi=\lambda \Psi$ and $\frac{d}{dt} \Psi=\hat P \Psi$, where
$\Psi$ is a $2\times1$ vector.
So $\displaystyle
\left( \begin{array}{c} q\\-r \end{array} \right) =\sum_{i=1}^n a_i \Psi_i$
is also one possible constraint.

For example AKNS $n=3$ is the coupled KdV equation
\begin{eqnarray}
 \left(
\begin{array}{c}
q\\r
\end{array}
\right)_t=
\left( \begin{array}{c}
6 q r q_x-q_{xxx}\\6 q r r_x-r_{xxx}
\end{array}
\right). \nonumber
\end{eqnarray}
The corresponding linearization operator is
\begin{eqnarray}
\hat P=\left(
\begin{array}{cc}
6 r \partial q-\partial^3& 6 q q_x\\
6 r r_x& 6 q \partial r-\partial^3
\end{array}
\right). \label{CoupledKdVP}
\end{eqnarray}
It is easy to verify
\begin{eqnarray}
 \left(
\begin{array}{c}
q\\-r
\end{array}
\right)_t=\hat P
\left( \begin{array}{c}
q\\-r
\end{array}
\right). \nonumber
\end{eqnarray}
Then one possible constraint is
\begin{eqnarray}
 \left(
\begin{array}{c}
q\\-r
\end{array}
\right)=\sum_{i=1}^n  a_i \left(
\begin{array}{c}
\psi_i\\\phi_i
\end{array}
\right), \nonumber
\end{eqnarray}
where $\left(\begin{array}{c}\psi_i\\ \phi_i\end{array}\right)=\Psi_i$
satisfies
$\hat L \Psi_i=\lambda_i \Psi_i$, $\frac{d}{dt} \Psi_i=\hat P \Psi_i$
and the expressions for $\hat L$ and $\hat P$ are
(\ref{AKNSL}) and (\ref{CoupledKdVP}).
\end{example}

\subsection{The equivalence of the two kinds of constraint}
\hspace*{0.6cm}We have introduced two kinds of constraints
in Lemma \ref{lemma01} and thereafter.
Now we will  explain the relation between the two kinds of constraint.

Recall the constraint of the first kind is
\begin{eqnarray}
g[u]=\sum_{i=1}^n a_i \psi_i . \label{conskind1}
\end{eqnarray}
Constraint of the second kind is
\begin{eqnarray}
g[u]=\sum_{i=1}^n b_i f_i . \label{conskind2}
\end{eqnarray}
In general $\psi_i$ and $f_i$ have not any relation.
But with constraint (\ref{conskind1}) or (\ref{conskind2})
they become linear dependent in most cases.
Suppose $\displaystyle f_i=\sum_{j=1}^n c_{ij} \psi_j$.
Then from the relation $f_{i+1}=\hat L f_i$, we get
\begin{eqnarray}
\sum_j c_{i+1,j} \psi_j=\sum_j \lambda_j c_{i j} \psi_j. \nonumber
\end{eqnarray}
So $c_{i+1,j}=\lambda_j c_{ij}$.

\begin{theorem}
For a constraint of the first kind
$\displaystyle g[u]=\sum_{i=1}^n a_i \psi_i$
satisfying that if $i\neq j$ then $\lambda_i \neq \lambda_j$,
there exist a  constraint of the second kind
$\displaystyle g[u]=\sum_{i=1}^n b_i f_i$,
$\displaystyle f_{n+1}=\sum_{i=1}^n k_i f_i$
where $b_i$ and $k_i$ is determined by  $c_{i+1,j}=\lambda_j c_{ij}$,
$\displaystyle a_j=\sum_{i=1}^n b_i c_{ij}$,
$\lambda_i^n =k_j \lambda_i^{j-1}$ with
arbitrary $c_{1i} \neq 0$.
\end{theorem}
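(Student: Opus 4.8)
The plan is to carry out the whole argument on the common basis of the eigenfunctions $\psi_1,\dots,\psi_n$, so that both required identities collapse to linear systems whose coefficient matrix is Vandermonde. Invertibility of that matrix will be exactly what the hypothesis $\lambda_i\neq\lambda_j$ for $i\neq j$ buys us, and the nondegeneracy of the arbitrary initial data $c_{1i}\neq0$ supplies the second diagonal factor. Thus the theorem reduces to two standard Vandermonde-solvability facts.

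First I would fix the representation of the $f_i$ in the $\psi$-basis. The relation $c_{i+1,j}=\lambda_j c_{ij}$ derived just before the statement, solved with the arbitrary data $c_{1j}\neq0$, gives $c_{ij}=\lambda_j^{\,i-1}c_{1j}$, hence $f_i=\sum_{j=1}^n \lambda_j^{\,i-1}c_{1j}\,\psi_j$. At this point it is essential to record that the $\psi_j$ are linearly independent: they are eigenfunctions of $\hat L$ for pairwise distinct eigenvalues $\lambda_j$, so any linear relation among them, after repeated application of $\hat L$ and a Vandermonde elimination, forces all coefficients to vanish. Consequently any identity $\sum_j\alpha_j\psi_j=\sum_j\beta_j\psi_j$ may be read off coefficientwise as $\alpha_j=\beta_j$.

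Next I would match the two constraints. Writing the second-kind candidate in the $\psi$-basis, $\sum_{i=1}^n b_i f_i=\sum_{j=1}^n\bigl(\sum_{i=1}^n b_i c_{ij}\bigr)\psi_j$, and comparing coefficients with $g[u]=\sum_{j=1}^n a_j\psi_j$ yields precisely $a_j=\sum_{i=1}^n b_i c_{ij}$. In matrix form this is $a=bC$ with $C_{ij}=c_{ij}=\lambda_j^{\,i-1}c_{1j}$, so $C=V\,\mathrm{diag}(c_{11},\dots,c_{1n})$ where $V_{ij}=\lambda_j^{\,i-1}$ is a transposed Vandermonde matrix. Its determinant is $\prod_{p<q}(\lambda_q-\lambda_p)\neq0$ precisely because the $\lambda_i$ are distinct, while the diagonal factor is invertible because every $c_{1j}\neq0$; hence $C$ is invertible and $b=aC^{-1}$ delivers the coefficients $b_i$.

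Finally I would produce the truncation relation. Using $f_{n+1}=\sum_j\lambda_j^{\,n}c_{1j}\psi_j$ together with $\sum_{i=1}^n k_i f_i=\sum_j\bigl(\sum_{i=1}^n k_i\lambda_j^{\,i-1}\bigr)c_{1j}\psi_j$, the coefficientwise comparison (legitimate again by independence of the $\psi_j$ and by $c_{1j}\neq0$) reduces $f_{n+1}=\sum_{i=1}^n k_i f_i$ to $\lambda_j^{\,n}=\sum_{i=1}^n k_i\lambda_j^{\,i-1}$ for every $j$, which is the relation $\lambda_i^n=k_j\lambda_i^{j-1}$ of the statement. This is once more a Vandermonde system, now in the unknowns $k_1,\dots,k_n$ with the same matrix $V$, so it has a unique solution; equivalently the monic polynomial $\lambda^n-\sum_{i=1}^n k_i\lambda^{i-1}$ is forced to equal $\prod_{i=1}^n(\lambda-\lambda_i)$, which determines the $k_i$ as the elementary symmetric functions of the $\lambda_i$. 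Substituting the constructed $b_i$ and $k_i$ then verifies both $g[u]=\sum_i b_i f_i$ and $f_{n+1}=\sum_i k_i f_i$. I expect the one genuinely load-bearing step to be the coefficientwise reading, i.e.\ the linear independence of the eigenfunctions $\psi_j$ under distinct eigenvalues; once that is in hand the rest is the routine invertibility of Vandermonde matrices with distinct nodes.
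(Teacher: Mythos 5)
Your proof is correct and takes essentially the same approach as the paper: the paper's entire proof is the single observation that $\prod_i c_{1i}\neq 0$ together with pairwise distinct $\lambda_i$ forces $\det(c_{ij})\neq 0$, which is exactly your factorization $C=V\,\mathrm{diag}(c_{11},\dots,c_{1n})$ with $V$ Vandermonde. You simply spell out the details the paper leaves implicit (the explicit formula $c_{ij}=\lambda_j^{\,i-1}c_{1j}$, solving $a=bC$ for $b$, and the Vandermonde system determining the $k_i$).
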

\begin{proof}Only to consider that if $\prod c_{1i} \neq 0 $ and
 $\lambda_i \neq \lambda_j$ then $\det (c_{ij}) \neq 0$. \end{proof}

\begin{theorem}
For a constraint of the second kind
$\displaystyle g[u]=\sum_{i=1}^n b_i f_i$,
$\displaystyle f_{n+1}=\sum_{i=1}^n k_i f_i$ satisfying that
$\displaystyle \lambda^n =\sum_{j=1}^n k_j \lambda^{j-1}$
 has distinct root for $\lambda$,
there exist a  constraint of the first kind
$\displaystyle g[u]=\sum_{i=1}^n a_i \psi_i$,
where $\lambda_i$ and $a_i$ is determined by:
$\lambda_i$ is the $i$-th root of
$\displaystyle \lambda^n =\sum_{j=1}^n k_j \lambda^{j-1}$ ,
 $c_{1i} \neq 0$,
$c_{i+1,j}=\lambda_j c_{ij}$,
$\displaystyle a_i=\sum_j b_j c_{ji}$, .
\end{theorem}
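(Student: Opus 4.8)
The plan is to recognize the second-kind recursion together with its truncation as a finite-dimensional linear system and to diagonalize it using the distinct-root hypothesis, thereby manufacturing the required eigenfunctions directly from the $f_i$.

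First I would assemble $F=(f_1,\dots,f_n)^T$ and rewrite $f_{i+1}=\hat L f_i$ together with $f_{n+1}=\sum_{i=1}^n k_i f_i$ as the single operator identity $\hat L F = M F$, where $M$ is the companion matrix of $\lambda^n-\sum_{j=1}^n k_j\lambda^{j-1}$. With the prescription $c_{i+1,j}=\lambda_j c_{ij}$, so that $c_{ij}=\lambda_j^{i-1}c_{1j}$, the $j$-th column of the matrix $C=(c_{ij})$ is $c_{1j}(1,\lambda_j,\dots,\lambda_j^{n-1})^T$; reading off the rows of $M$ shows this column is an eigenvector of $M$ with eigenvalue $\lambda_j$ precisely because $\lambda_j$ solves the characteristic equation. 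Hence $MC=C\Lambda$ with $\Lambda=\mathrm{diag}(\lambda_1,\dots,\lambda_n)$.

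Next I would invert $C$. Since the $\lambda_j$ are distinct and every $c_{1j}\neq0$, the determinant of $C$ equals the Vandermonde determinant times $\prod_j c_{1j}$ and is therefore nonzero, exactly as in the proof of the preceding theorem. I then define the candidate eigenfunctions by $\Psi:=C^{-1}F$, that is $\psi_i=\sum_j(C^{-1})_{ij}f_j$. Applying $\hat L$ gives $\hat L\Psi=C^{-1}\hat LF=C^{-1}MF=C^{-1}MC\,\Psi=\Lambda\Psi$, so that $\hat L\psi_i=\lambda_i\psi_i$. Because $C$ and hence $C^{-1}$ have constant entries, the time evolution transfers verbatim: from $\dot f_j=\hat P f_j$ one obtains $\dot\psi_i=\sum_j(C^{-1})_{ij}\dot f_j=\hat P\psi_i$, so the $\psi_i$ are genuine eigenfunctions of the Lax pair in the sense of (\ref{LIST}) and (\ref{PIST}).

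Finally I would substitute $f_j=\sum_i c_{ji}\psi_i$ into the given constraint to get $g[u]=\sum_j b_j f_j=\sum_i\bigl(\sum_j b_j c_{ji}\bigr)\psi_i=\sum_i a_i\psi_i$, which is the asserted first-kind constraint with $a_i=\sum_j b_j c_{ji}$. The only substantive step is the diagonalization of the first two paragraphs, and the distinct-root hypothesis is exactly what makes $M$ diagonalizable and $C$ invertible; after that the eigenvalue equation, the time evolution, and the rewriting of the constraint are all routine. The main obstacle is thus simply recognizing that the columns of the Vandermonde-type matrix $C$ are the eigenvectors of the companion matrix—once this is seen, the rest is bookkeeping.
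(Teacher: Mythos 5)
Your proof is correct and follows essentially the same route the paper intends: the paper actually leaves this theorem without a written proof, its only hint being the one-line proof of the preceding (converse) theorem---that distinct $\lambda_i$ together with $c_{1i}\neq 0$ give $\det(c_{ij})\neq 0$---and your companion-matrix diagonalization, with $\psi_i$ defined through $C^{-1}F$ and the eigenvalue/time-evolution checks, is precisely a careful write-up of that Vandermonde-invertibility argument. Nothing further is needed.
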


\section{Searching the  special form of Lax pairs and
constraining the nonlinear Lax pair systems}

\hspace*{0.6cm}It is well-known that the WE method is
a powerful tool for  searching the Lax pairs for PDEs.
By  the preceding examples we have known that one PDE can
have completely different Lax pairs. So which Lax pair
may be found by the WE method is heavily dependent on the original assumptions
for the form of  Lax pair.
Furthermore the Lax pairs obtained by the WE method are  often
nonlinear ones, which must be linearized before further application.
So very often the nonlinear Lax pairs are considered to be useless.
In this section we will first demonstrate, by the example of the KdV equation,
 how to search the Lax pair in  Form II.
Then we will propose a functional equation for finding constraints
for both linear or nonlinear Lax pair systems.

Let us first demonstrate by the example of KdV equation
how Lax pair (\ref{kdvL2}), (\ref{kdvP2})
can be found by the usual WE method.
The KdV equation is first written into
\begin{eqnarray}
u_x&=&z, \nonumber\\
z_x&=&p, \nonumber\\
u_t&=&6 u z+p_x . \label{kdv1ord}
\end{eqnarray}
Substituting (\ref{kdv1ord}) to the zero-curvature equation
\begin{eqnarray}
M_t-N_x+[M,N]=0, \nonumber
\end{eqnarray}
where $M=M(u,z,p)$,$N=N(u,z,p)$,
we immediately get
\begin{eqnarray}
z_t M_z+ p_t M_p +p_x (M_u-N_p)+6 u z M_u-
z N_u-p N_z+[M,N]=0. \nonumber
\end{eqnarray}
Here $z_t$, $p_t$ and $p_x$ should be considered as independent variables.

{\sl Remark:} The KdV equation in the form of (\ref{kdv1ord}), in fact,
has implied that $M$ is only dependent on $u$. In other words,
we are searching Lax pairs for the KdV equation
whose $M$ is only dependent on $u$.

Then by the usual steps of the WE method we get
\begin{eqnarray}
M&=&X_1+u X_2+u^2 X_3, \nonumber \\
N&=&X_4+(3 u^2+p) X_2+(4 u^3+2 u p-z^2 )X_3+z [X_1,X_2]+u [X_1,[X_1,X_2]]+
\frac{u^2}{2} [X_2,[X_1,X_2]], \nonumber
\end{eqnarray}
where
\begin{eqnarray}
[X_1,X_3]& = & 0, \nonumber \cr
[ X_2 , X_3 ] & = & 0 , \nonumber \cr
[X_1,X_4]&=&0, \nonumber \cr
[X_1,[X_1,[X_1,X_2]]]+[X_2,X_4]&=&0, \nonumber\cr
[X_2,[X_2,[X_1,X_2]]]&=&0, \nonumber \\
\frac{1}{2} [X_1,[X_2,[X_1,X_2]]]+ 3 [X_1,X_2]+
[X_2,[X_1,[X_1,X_2]]]+[X_3,X_4] &=&0 . \label{Xcomm}
\end{eqnarray}
We would first not consider (\ref{Xcomm}) as an open Lie structure,
but rather a matrix equation. Then we will get linear Lax pair.
The $2 \times 2$ realization of (\ref{Xcomm}) can be solved by
Maple
and there are 12 solutions  obtained by Maple.
However, only one solution is worthy of notice.
The solution is
\begin{eqnarray}
&&x_{3;{2,1}}=0,\cr
&&x_{2;{1,2}}=0,\cr
&&x_{2;{2 ,2}}=x_{2;{1,1}},\cr
&&x_{3;{1,2}}=0,\cr
&&x_{3;{2,2}}=x_{3;{1,1}},\cr
&&x_{2;{2,1}}=-
{x_{1;{1,2}}}^{-1},\cr
&&x_{4;{2,1}}=x_{1;{2,1}} \left( {x_{1;{1,1}}}^{2}+{x
_{1;{2,2}}}^{2}-2\,x_{1;{1,1}}x_{1;{2,2}}+4\,x_{1;{1,2}}x_{1;{2,1}}
 \right) , \cr
&&x_{4;{1,2}}={x_{1;{1,1}}}^{2}x_{1;{1,2}}+x_{1;{1,2}}{x_{1;{2
,2}}}^{2}-2\,x_{1;{1,1}}x_{1;{1,2}}x_{1;{2,2}}+4\,{x_{1;{1,2}}}^{2}x_{
1;{2,1}}, \cr
&&x_{4;{2,2}}=x_{4;{1,1}}-{x_{1;{1,1}}}^{3}-3\,x_{1;{1,1}}{x_{1
;{2,2}}}^{2}+3\,{x_{1;{1,1}}}^{2}x_{1;{2,2}}-4\,x_{1;{1,1}}x_{1;{1,2}}
x_{1;{2,1}}\cr
&&\hspace*{1.3cm}+{x_{1;{2,2}}}^{3}+4\,x_{1;{1,2}}x_{1;{2,1}}x_{1;{2,2}},
\label{sl11}
\end{eqnarray}
where $x_{i;j,k}$ is the $(j,k)$ element of matrix  $X_i$.
Thanks to (\ref{sl11}),  the equation
\begin{eqnarray}
\Psi_x=M \Psi, \label{psix}
\end{eqnarray}
where $\Psi=\left(\begin{array}{c}\psi(x,t)\\\phi(x,t)
\end{array}\right)$ is a $2$-dimensional vector, is
completely determined.
The goal  equation is $\hat L \varphi=\lambda \varphi$.
$\varphi$ may be the linear composition of $\psi$ and $\phi$
\begin{eqnarray}
\varphi=f_1([u]) \psi+f_2([u]) \phi . \label{vphdef}
\end{eqnarray}
Clearly $\varphi$ satisfies a $2$-order ODE.
If we demand the
$2$-order ODE do not contain $u^{(n)}$,$n \geq 1$, then
we obtain $f_2=0$, $f_1=Const$, $x_{3;1,1}=0$ and
$x_{2;1,1}=0$. Obviously, we may set $f_1=1$.
Now (\ref{sl11}) is reduced to
\begin{eqnarray}
\varphi_{xx} -(x_{1;1,1}+x_{1;2,2} )\varphi_x+
(u- x_{1;1, 2} x_{1;2, 1}+ x_{1;1, 1} x_{1;2, 2}) \varphi=0. \label{rdkdvL}
\end{eqnarray}
Comparing Equation (\ref{rdkdvL}) with the goal equation
$\hat L \varphi=\lambda \varphi$, we immediately get
\begin{eqnarray}
\varphi_{xx} +\lambda_1\varphi_x+
(u+\lambda_2) \varphi=0. \label{rdkdvL2}
\end{eqnarray}
To write a Lax pair in Form II from (\ref{rdkdvL2}),
we only need to introduce the variable $\xi=\varphi_x$.
Now (\ref{rdkdvL2}) can be rewritten to
\begin{eqnarray}
\hat L \xi \equiv (\partial+(u+\lambda_2)\partial^{-1})\xi=-\lambda_1 \xi.
\nonumber
\end{eqnarray}
Now $\hat L$ here is in Form II.
That $\hat P$ is also in Form II can be verified directly.

With regard to the other 11 solutions of (\ref{sl11}),
four of them can not be reduced to $2$-order ODE;
seven of them can not
be reduced to $2$-order ODE whose coefficients are only functions of $u$.

Now we  turn to nonlinear Lax pairs.
It has NEVER been demanded  the Lax pair  be linear.
Then where is the linearity?
We find the following functional equation  is crucial in integrable systems.
The functional equation  is
\begin{eqnarray}
\frac{d}{dt} g[u,\psi,\lambda]=\hat A[u] g[u,\psi,\lambda],
\label{CrucialLinear}
\end{eqnarray}
where $g[u,\psi,\lambda]$ denotes a function of finite variables
including  the original variables  $u^{(i)}$,
auxiliary variables  $\psi_i$ and   complex parameters $\lambda$.
But $\hat A[u]$ is  a linear differential operator which only
involves functions of finite variables of  $u^{(i)}$.
Of course, non-degenerate conditions
$\frac{\delta}{\delta u} g[u,\psi,\lambda] \neq 0$ and
$\frac{\delta}{\delta \psi} g[u,\psi,\lambda] \neq 0$ must be satisfied.
The final constraint for the general Lax system is
\begin{eqnarray}
\sum_\lambda g[u,\psi,\lambda]=0. \label{CrucialConstraint}
\end{eqnarray}
It seems that
the  linearity  nature of  Lax integrable systems is completely characterized
by Equation (\ref{CrucialLinear}) and (\ref{CrucialConstraint}).
To see how these arguments works, let us still take the KdV equation
as an example.

Following WE we regard Equation (\ref{Xcomm}) as a Lie algebra constraint.
Here we only cite the
final equations for the pseudo-potentials:
\begin{eqnarray}
&&\frac{\partial y_2}{\partial x}=-e^{2 y_3}, \hspace{1.2cm}
\frac{\partial y_2}{\partial t}=-2 e^{2 y_3} (u+2 \lambda),\nonumber\\
&&\frac{\partial y_3}{\partial x}=-y_8, \hspace{1.5cm}
\frac{\partial y_3}{\partial t}=z- 2 (u+2 \lambda) y_8, \nonumber\\
&&\frac{\partial y_8}{\partial x}=\lambda-u-y_8^2, \quad
\frac{\partial y_8}{\partial t}=2 z y_8-2 (u+2 \lambda)(u+y_8^2-\lambda)-p.
 \label{EWnlLax}
\end{eqnarray}
Note  the coefficients have been adjusted to fit the KdV equation here.
Because (\ref{EWnlLax}) is a nonlinear Lax pair, the subsequent step
is  usually to  find some transformation to linearize (\ref{EWnlLax}).
But Equation (\ref{CrucialLinear}) and (\ref{CrucialConstraint}) enable
us to solve the KdV equation straightly.
Our task is to find a function $g$ and
a linear operator $\hat A$ satisfying (\ref{CrucialLinear}).
It is easy to check
\begin{eqnarray}
g=\alpha e^{-2 y_3}+\beta u+\gamma, \quad \hat A=\partial^3+6 u \partial,
\quad \alpha,\beta,\gamma \in {\mathbb C}, \label{gAsolu01}
\end{eqnarray}
is a solution of (\ref{CrucialLinear}).
So
\begin{eqnarray}
\sum_{i=1}^n (\alpha_i e^{-2 y_{3i}} +\beta_i u+\gamma_i)=
\gamma+\beta u+\sum_{i=1}^n  \alpha_i e^{-2 y_{3i}}=0, \label{KdVpsi2Con}
\end{eqnarray}
is a proper constraint, where $y_{3i}$ satisfies
(\ref {EWnlLax}) with $\lambda=\lambda_i$.

Given a nonlinear Lax pair, (\ref{CrucialLinear}) can always be
proposed no matter whether the nonlinear Lax pair is linearizable.
The difficulty is  to give   nontrivial $g$ and $\hat A$.

Open Lie algebra (\ref{Xcomm}) has more than one solutions.
Let us assume $X_3=0$. (It seems also to be not clear yet
when $X_3$ is a centre.)  One 4-dimensional realization of (\ref{Xcomm})
is
\begin{eqnarray}
&&[X_1,X_2]=-\alpha \beta X_4+\frac{1}{\alpha^2}X_5,
[X_1,X_4]=0,  [X_1,X_5]=2 \alpha^2 X_1+\alpha X_2, \nonumber\\
&&[X_2,X_4]=\beta X_4-\frac{1}{\alpha^3}X_5,
[X_2,X_5]=\alpha^2 \beta^2 X_1-2 \alpha^2 X_2-\beta X_5,
[X_4,X_5]=2 \alpha X_1 +X_2.
\nonumber
\end{eqnarray}
If we set $\bar X_4=X_4-\frac{1}{\alpha} X_1$, then $\bar X_4$ commute with
any other vectors.
If there is no centre then $\bar X_4=0$, i.e., $X_4=\frac{1}{\alpha}X_1$.
Therefore, in fact we have a 3-dimensional realization  of (\ref{Xcomm})
\begin{eqnarray}
[X_1,X_2]=\frac{1}{\alpha^2} X_5-\beta X_1,
[X_1,X_5]=2 \alpha^2 X_1+\alpha X_2,
[X_2,X_5]=\alpha^2 \beta^2 X_1-2 \alpha^2 X_2-\beta X_5.
\label{comuDIM3}
\end{eqnarray}
Following  WE  we get the following nonlinear Lax pair
\begin{eqnarray}
&&\frac{\partial y_1}{\partial x}=-1+\alpha u(2-\frac{\alpha}{\gamma} e^{
 \frac{y_1}{\sqrt{\alpha}}}-\frac{\gamma}{\alpha} e^{
 -\frac{y_1}{\sqrt{\alpha}}}), \nonumber\\
&&\frac{\partial y_1}{\partial t}=-\frac{1}{\alpha}+4 \alpha u^2+
2 \alpha u''-\frac{\alpha}{\gamma}(u+2 \alpha u^2+
\sqrt{\alpha} u'+\alpha u'') e^{ \frac{y_1}{\sqrt{\alpha}}} \nonumber\\
&&\hspace{1.2cm}-\frac{\gamma}{\alpha} (u+2 \alpha u^2-
 \sqrt{\alpha} u'+\alpha u'')
 e^{-\frac{y_1}{\sqrt{\alpha}}}, \nonumber\\
&&\frac{\partial y_2}{\partial x}=u(\frac{\alpha^2}{\gamma^2} e^{
 \frac{y_1}{\sqrt{\alpha}}}-e^{ -\frac{y_1}{\sqrt{\alpha}}}), \nonumber\\
&&\frac{\partial y_2}{\partial t}=\frac{\alpha^2}{\gamma^2}(
\frac{u}{\alpha}+2 u^2+\frac{u'}{\sqrt{\alpha}}+u'') e^{
 \frac{y_1}{\sqrt{\alpha}}}-(
\frac{u}{\alpha}+2 u^2-\frac{u'}{\sqrt{\alpha}}+u'') e^{
 -\frac{y_1}{\sqrt{\alpha}}} . \label{nlLax301}
\end{eqnarray}
By a definition of $\bar y_1=\frac{y_1}{\sqrt{\alpha}} +\ln \alpha-\ln \gamma$
and $\bar y_2=\frac{2 \alpha}{\gamma} y_2$, we can simplify
(\ref{nlLax301}) to
\begin{eqnarray}
&&\frac{\partial \bar y_1}{\partial x}=-\frac{1}{\sqrt{\alpha}}+
 2 \sqrt{\alpha} u- 2 \sqrt{\alpha} u \cosh(\bar y_1), \nonumber\\
&&\frac{\partial \bar y_1}{\partial t}=
\sqrt{\alpha}(-\frac{1}{\alpha^2}+ 4  u^2+2  u'')
-2 \sqrt{\alpha}(\frac{u}{\alpha}+2  u^2+u'') \cosh(\bar y_1)
-2  u' \sinh (\bar y_1), \nonumber\\
&&\frac{\partial \bar y_2}{\partial x}=
u \sinh (\bar y_1), \nonumber\\
&&\frac{\partial \bar y_2}{\partial t}=
\frac{u'}{\sqrt{\alpha}}  \cosh(\bar y_1)
+(\frac{u}{\alpha}+2 u^2+ u'')  \sinh (\bar y_1).
\label{nlLax302}
\end{eqnarray}
Still another substitution $\tilde y_1=e^{\bar y_1}$ ,
$\epsilon=\sqrt{\alpha} $ and $\tilde y_2=2 y_2$
transforms (\ref{nlLax302}) to a form more suitable for computation,
\begin{eqnarray}
&&\frac{\partial  \tilde y_1}{\partial x}=
-\epsilon u+ (2 \epsilon u-\frac{1}{\epsilon}) \tilde y_1
-\epsilon u \tilde y_1^2,  \nonumber\\
&&\frac{\partial  \tilde y_1}{\partial t}=
-\frac{u}{\epsilon}-2 \epsilon u^2+u'-\epsilon u''-
\epsilon(\frac{1}{\epsilon^4}-4 u^2-2 u'') \tilde y_1
-(\frac{u}{\epsilon}+2 \epsilon u^2+u'+\epsilon u'')  \tilde y_1^2,
\nonumber\\
&&\frac{\partial \tilde y_2}{\partial x}=
u \tilde y_1 -u \frac{1}{\tilde y_1} ,
\nonumber\\
&&\frac{\partial \tilde y_2}{\partial t}=
(\frac{u}{\epsilon^2}+2 u^2+\frac{u'}{\epsilon}+u'')\tilde y_1
-(\frac{u}{\epsilon^2}+2 u^2-\frac{u'}{\epsilon}+u'')
\frac{1}{\tilde y_1}. \label{nlLax303}
\end{eqnarray}
Now the task is to give $g$ and $\hat A$.
With the help of Maple we get the following solution
\begin{eqnarray}
&&g=c_0 +c_1 u+c_2 \frac{(\tilde y_1-1)^2}{\tilde y_1}
e^{\epsilon \tilde y_2}, \nonumber\\
&&\hat A=\partial^3+6 u\partial . \label{gAsolution}
\end{eqnarray}
Then it becomes very clear that (\ref{nlLax303}) and (\ref{EWnlLax}) must
be equivalent.
In fact, the transformation from (\ref{nlLax303}) to (\ref{EWnlLax}) is
\begin{eqnarray}
y_8=\frac{1}{2 \epsilon}\frac{1+\tilde y_1}{1-\tilde y_1},
\quad y_3=\frac{1}{2}\epsilon \tilde y_2 +
\ln (\tilde y_1-1)-\frac{1}{2}\ln \tilde y_1 ,
\quad \lambda=\frac{1}{4 \epsilon^2}.
\label{convertNLNL}
\end{eqnarray}

\section{Conclusions}
\hspace*{0.6cm}We have proposed a kind of  natural constraints
for evolution PDEs with a special kind of Lax pairs.
By the method several examples have been studied in detail.
At least two interesting results should be  noticed.
One is that KdV  equation has  a  soliton-Lax pair,
from which only soliton  solutions can appear.
The other  is that the Ito equation can be  constraint
to a series of PDEs that  are solvable by the method of characteristics.

For a given  Lax pair, it is always very difficult to solve
the functional equation (\ref{CrucialLinear}) completely.
Even for the KdV  equation
only special solutions of (\ref{CrucialLinear}) have
been obtained.
It is necessary to point out that solutions (\ref{gAsolu01})  and (\ref{gAsolution}) are both
obtained by the classical separation of variables.
More general ansatz  about the solutions of (\ref{CrucialLinear}) will
lead to  a set of too complicated equations to solve.
This also explains
why we seek the special kind of Lax pairs, of which the constraints are
manifest.
Equations (\ref{CrucialLinear}) and (\ref{CrucialConstraint})
can act as a measure of integrability
for evolution equations with Lax pairs.
Any generalization of (\ref{CrucialLinear}) and (\ref{CrucialConstraint})
seems so difficult.

\section*{Acknowledgments}
\hspace*{0.6cm}The first author wish to thank
Prof. \fbox{H. Y. Guo}, Prof. S. K. Wang  and Dr. D. S. Wang
for their helpful discussions and encouragements. The work
is partly supported by  NSFC (No.10735030), NSF of Zhejiang Province (R609077, Y6090592),
NSF of Ningbo City (2009B21003, 2010A610103, 2010A610095).


\begin{thebibliography}{99}
\bibitem{Brunelli} Brunelli J. C., The bi-Hamiltonian structure of the short pulse equation,
{\sl Phys. Lett. A} V.{\bf 353} (2006) 475-478.
\bibitem{CAOCW} Cao C.W.,  Nonlinearization of the Lax system for
   AKNS hierarchy, {\sl Science in China, Ser. A} {\bf 33} (1989) 528-536.
\bibitem{DICKY} Dickey L. A., Soliton equations and Hamiltonian systems,
            World Scientific, Singapore, 1991.
\bibitem{DS} Drinfeld V. and Sokolov V., Lie algebras and KdV type equations,
{\sl J. Sov. Math.} {\bf 30} (1985) 1975-2036
\bibitem{Ito} Ito M., An extension of nonlinear evolution equations
of the K-dV (mK-dV) type to higher orders,
{\sl J. Phys. Soc. Jpn.} {\bf 49} (1980) 771-778
\bibitem{KAUP} Kaup D. J., The Estabrook-Wahlquist method with
 examples of application, {\sl Physica D}, {\bf 1} (1980) 391-441.
\bibitem{LiYSh}Li Y. S., Soliton and Integrable System, Shanghai Scientific
and Technological Education Publishing House, Shanghai, 1999.(In Chinese)
\bibitem{LiYSh2}Cheng Y., Li Y.S.,
The constraint of the Kadomtsev-Petviashvili equation and
its special solutions, {\sl Phys. Lett. A} V.{\bf 157} (1991) 22-26.
\bibitem{SchaferWayne} Sch\"{a}fer T. and Wayne C. E.,
Propagation of ultra-short optical pulses in cubic nonlinear media.
{\sl Physica D}  {\bf 196} (2004)  90-105.
\bibitem{Sokovich}Sakovich S.Yu.,
On zero-curvature representations of evolution equations,
{\sl J. Phys. A: Math. Gen.}, {\bf 28}(1995) 2861-2869.
\bibitem{EW}Wahlquist H. D. and Estabrook F. B.,
     Prolongation structures of nonlinear evolution equation,
     {\sl J. Math. Phys.} {\bf 16} (1975)  1-7.
\end{thebibliography}
\end{document}